\newtcolorbox{mybox}[3][]{
  colframe = #2!25,
  colback  = #2!10,
  coltitle = #2!20!black,
  title    = {#3},
  #1,
}
\newtheorem{theorem}{Theorem}[section]
\newtheorem{lemma}[theorem]{Lemma}
\theoremstyle{definition}
\newtheorem{example}[theorem]{Example}
\date{}
\begin{document}

\title{Quantum recurrences and the arithmetic of Floquet dynamics}

\author[1,2]{Amit Anand}
\author[1]{Dinesh Valluri} 
\author[5]{Jack Davis}
\author[1,2,3,4]{Shohini Ghose}
\affil[1]{Institute for Quantum Computing, University of Waterloo, Waterloo, Ontario, Canada N2L 3G1}
\affil[2]{Department of Physics and Astronomy, University of Waterloo, Waterloo, Ontario, Canada N2L 3G1}
\affil[3]{Department of Physics and Computer Science, Wilfrid Laurier University, Waterloo, Ontario, Canada N2L 3C5}
\affil[4]{Perimeter Institute for Theoretical Physics, 31 Caroline St N, Waterloo, Ontario, Canada N2L 2Y5}
\affil[5]{QAT team, DIENS, \'Ecole Normale Sup\'erieure, PSL University, CNRS, INRIA, 45 rue d'Ulm, Paris 75005, France}

\maketitle

\begin{abstract}
The Poincaré recurrence theorem shows that conservative systems in a bounded region of phase space eventually return arbitrarily close to their initial state after a finite amount of time. 
An analogous behavior occurs in certain quantum systems where quantum states can recur after sufficiently long unitary evolution, a phenomenon known as quantum recurrence. Periodically driven (i.e.\ Floquet) quantum systems in particular exhibit complex dynamics even in small dimensions, motivating the study of how interactions and Hamiltonian structure affect recurrence behavior. While most studies treat recurrence in an approximate, distance-based sense, here we address the problem of \textit{exact}, state-independent recurrences in a broad class of finite-dimensional Floquet systems, spanning both integrable and non-integrable models. 
Leveraging techniques from algebraic field theory, we construct an arithmetic framework that identifies all possible recurrence times by analyzing the cyclotomic structure of the Floquet unitary's spectrum. 
This computationally tractable approach yields both positive results, enumerating all candidate recurrence times, and definitive negative results, rigorously ruling out candidate recurrence times for a given set of Hamiltonian parameters. 
We further prove that rational Hamiltonian parameters do not, in general, guarantee exact recurrences, revealing a subtle interplay between system parameters and long-time dynamics. 
Our findings sharpen the theoretical understanding of quantum recurrences, clarify their relationship to quantum chaos, and highlight parameter regimes of special interest for quantum metrology and control.

\end{abstract}

\section{Introduction}\label{sec:introduction}

The idea that a system's classical dynamics may eventually repeat itself has existed for a long time. In the late 19th century Henri Poincaré established the now-famous Poincaré recurrence theorem, which states that any classical conservative system evolving within a bounded region of phase space will, after a sufficiently long but finite amount of time, return arbitrarily close to its initial state \cite{poincar_1890_avantpropos}.  
The theorem is non-constructive: it does not provide a method for calculating the recurrence time, and so finding such times must be done case by case \cite{gimeno_upper_2017,suskind_complexity_2018, alvaro_recurrence_2023, rauer_recurrence_science_2018,freedman_quantumrecurrent_2024,levesque_rec_time_2025,ropotenko_the_2025,kotowski_tightboundsrecurrencetime_2026}.  
This question of repeated dynamics naturally arose in quantum physics as well, with an early result by Bocchieri and Loinger demonstrating that a quantum state vector evolving unitarily after a sufficiently long time must return arbitrarily close in Hilbert-Schmidt distance to its initial state \cite{bocchieri_1957_quantum}.
Interestingly, in the case of periodically driven quantum systems, the quantum recurrence time is strongly influenced by the integrability of the system \cite{venuti_2015_the}. Understanding recurrences in such systems can thus yield important insights into the structure of quantum dynamics and the mechanisms behind recurrences.

Periodically driven quantum systems have come under intense study in recent years and have been related to thermalization in closed systems \cite{kaufman_2016_quantum}, quantum many-body scars \cite{Pai_and_Pretko_scar_2019,Sen_scar_2020,Norio_scars_2020}, and exotic phases of matter such as discrete time crystals and Floquet topological phases \cite{roy_2017_floquet,sacha_2017_time,khemani_2019_a}. They also play a central role in the study of quantum chaos, displaying rich dynamics such as integrability-to-chaos transitions and dynamical Anderson localization \cite{Izrailev_Shepelyanskii_1980, fishman_rotor_anderson_1982}. 
Such systems are analyzed using Floquet theory~\cite{gfloquet_1883}, which describes the evolution of a quantum system with a time-periodic Hamiltonian, $H(t+T)=H(t)$, via the repeated application of a unitary operator. Specifically, the evolution over one period is captured by the Floquet operator $
U = \mathcal{T} \exp\left(-i \int_t^{t+T} H(t') \, dt'\right)$,
where $\mathcal{T}$ denotes time-ordering, and $T$ is the period of the external drive~\cite{grifoni_driven_1998}.

The question of quantum recurrences in Floquet systems was first addressed in \cite{hogg_1982_recurrence}, which showed that under certain conditions, the dynamics of a time-dependent Hamiltonian can exhibit quasi-periodicity. Subsequent work linked the presence of recurrences to the rationality of the parameters in a Floquet Hamiltonian \cite{fishman_rotor_anderson_1982}, and recent studies have refined this understanding. For example, \cite{pandit_2022_bounds} explores the algebraic constraints on the recurrence of observables in Floquet systems, while \cite{oszmaniec_2024_saturation} establishes a relationship between recurrence times and the circuit complexity of the corresponding unitary evolution.

In many previous studies, the notion of recurrence has typically relied on an approximate, distance-based definition, rather than the exact recurrence exhibited in certain Floquet systems~\cite{khemani_2019_a, anand_davis_ghose_2024}. The recurrences explored in this work are \textit{state-independent} and \textit{exact}, in contrast to the \textit{state-dependent} and \textit{approximate} recurrences guaranteed by the quantum Poincaré recurrence theorem~\cite{bocchieri_1957_quantum}. Previous works have shown that exact state-independent recurrences are typically robust to small perturbations in the Hamiltonian parameters: they are often accompanied by nearby state-independent approximate recurrences~\cite{sayan_time_crystal_2020,Zou_Wang_pseudo_2022, anand_davis_ghose_2024}. Moreover, such exact recurrences provide sharply defined recurrence times, which have recently been shown to be essential for certain metrological protocols~\cite{zou_2025_enhancing,biswas_2025_the}. These protocols rely on the evolution reaching specific recurrence times, allowing one to ignore the intermediate dynamics entirely. In contrast, recurrence times associated with approximate, state-dependent recurrences in non-integrable systems tend to be extremely long, rendering them unfeasible for metrological applications. There has consequently been a growing interest in dynamical regimes that exhibit such exact recurrences, particularly in Floquet systems, where they can be exploited to design novel metrological protocols offering a quantum advantage.

In this work, we provide a general tool to identify the existence of exact state-independent recurrences in Floquet quantum systems across a broad class of Hamiltonians and Hilbert space dimensions, encompassing both integrable and non-integrable models. We leverage techniques from algebraic field theory to develop our framework for identifying such recurrences in finite-dimensional quantum systems.
While arithmetic methods have previously been employed to study quantum chaos in dynamical systems \cite{jens_aqc_1993, bogomolny_arithmetical_1997, marklof_arithmetic_2006}, our focus is fundamentally different: we ask whether exact recurrences in Floquet systems can be efficiently determined and, crucially, when they can be ruled out. One should note that if all the eigenvalues of a unitary operator $\{e^{i\theta_k}\}_k$ are known explicitly, the problem is trivial.  In particular, a unitary exhibits exact state-independent recurrence if and only if each phase angle $\theta_k$ is a rational multiple of $\pi$. However, for a general unitary matrix, it is not possible to determine all eigenvalues analytically for dimension greater than four.

To address this, we apply the theory of cyclotomic field extensions to the splitting field \cite{dummit_foote_abstract_2003} of the characteristic polynomial of the Floquet unitary. 
The degree of such an extension is intimately connected to the system’s potential for temporal periodicity. Critically, this does not require explicit knowledge of the eigenvalues -- it suffices to identify the number field to which the unitary matrix elements belong.
Using this approach, we rigorously determine, for a given Floquet unitary matrix $U$ with algebraic entries (a case that includes many physically relevant models), all possible values $n \in \mathbb{N}$ for which $U^n = \tau I$ for some residual phase $\tau = e^{i\theta} \in \mathrm{U}(1)$. The key strength of our method lies in its \textit{finiteness}: the set of candidates for $n$ is finite and checkable. If none of the candidates satisfy the recurrence condition, the existence of an exact recurrence for that system can be conclusively ruled out.  We furthermore derive an upper bound on the possible values of $n$ in terms of the system's Hamiltonian parameters and dimension, whereas previous work on recurrence times gives approximate scalings depending only on system size \cite{gimeno_upper_2017,suskind_complexity_2018, freedman_quantumrecurrent_2024,levesque_rec_time_2025,ropotenko_the_2025}.
This provides a powerful diagnostic tool for probing recurrence structure in driven quantum systems.

To illustrate our method, we study the quantum kicked top (QKT) model, a well-studied Floquet system which has a chaotic classical limit \cite{haake_1987}. It is a finite-dimensional, periodically driven spin system with conserved total angular momentum $j$. Being a single spin-$j$ system, it is furthermore equivalent to a system of indistinguishable qubits, which allows techniques from entanglement theory to be applied, in particular for ruling out the candidate recurrence times.  In previous works, state dependent temporal periodicity of physical observables for small $j$ has been studied  \cite{Ruebeck_Pattanayak_2017,Bhosale_Santhanam_periodicity_2018,Dogra_exactly_2019,bhosle_qkt__jan_2024,bhosle_nov_2024,bhosle_qkt_aug_2024}. In our recent work \cite{anand_davis_ghose_2024} we studied quantum recurrences in the quantum kicked top in more generality, and demonstrated an infinite family of QKT dynamics with purely quantum recurrences that do not appear in the classical limit of a chaotic system. These recurrences are state-independent and thus do not correspond to classical periodic orbits, which are contingent on the initial state. We analytically proved the existence of these recurrences for certain sets of Hamiltonian parameters across all finite dimensions. However, in the case of a negative result, we were only able to provide a numerical verification.
Extending our previous work, we now employ the above method to systematically identify exact recurrences for a significantly larger set of Hamiltonian parameters. Crucially, the framework is not only capable of finding all such recurrences when they exist, but also of delivering a definitive negative result—rigorously ruling out the possibility of any exact recurrence for the given parameters. In the latter case, verification reduces to a finite, parameter-dependent search, which can be carried out numerically with computational resources commensurate to the system size.

The paper is structured as follows: in Section \ref{sec: method}, we introduce the arithmetic framework used to find possible recurrence times based on the Hamiltonian parameters. We determine the set of possible recurrences given a set of Hamiltonian parameters. Section \ref{sec: QKT} applies this approach to a specific example—the quantum kicked top with spin-$3/2$. Here, we identify both the parameter sets that lead to exact periodic behavior and those that do not. Section  \ref{sec: discussion} summarizes the broader implications of our results and highlights how arithmetic methods can provide insights into recurrence behavior in driven quantum systems. The key mathematical background—especially on algebraic field extensions—is presented in Appendix \ref{sec: appendix}.

\section{Method}\label{sec: method}

In this section, we present our method to study the exact recurrence of quantum systems evolved under unitary Floquet dynamics. Let the dynamics of the Floquet system be governed by the time-periodic Hamiltonian, $H(t)$ satisfying
\begin{equation}
    H(t) = H(t+T),
\end{equation}
where $T$ is the time-period.  The corresponding unitary evolution operator for one time period is $U$, as defined previously (see introduction section \ref{sec:introduction}).  We say that $U$ is \textit{quasi-periodic} or simply \textit{periodic} by abuse of language, if there exists an $n \in \mathbb{N}$ such that $U^n = \tau I$ for some $\tau = e^{i \theta} \in \mathrm{U}(1)$.  We call the smallest such positive integer $n$ the \textit{period} of $U$.
Then the steps for finding such $n$ are as follows:
\begin{itemize}
    \item Let $U$ be a $d\times d$ unitary matrix.  We assume there exists a $n$ such that $U^n= \tau \mathbb{I}$, where $\tau=e^{i\theta} \in \mathrm{U}(1)$ is some global phase with phase angle $\theta$.
    \item In Lemma \ref{proof:lemma 1} we establish the consequential restrictions on the eigenvalues $\{\lambda_{i}\}_{i=1}^{d}$ of $U$.  Namely, we are able to construct a primitive $n$-th root of unity $\zeta_n$ in terms of the eigenvalues.  
    \item Next we provide a theorem used to calculate a finite set of all such possible $n$, using the fact that $\zeta_n$ belongs to the splitting field of the characteristic polynomial $p_{U}(t) := \text{det}(U-tI)$ of $U$. This also gives an upper bound on $n$ that only depends on $d$ and the degree of the field over which $U$ is defined.
    \item Once we get the finite set of candidate $n$ values, we calculate physically relevant quantities such as von Neumann entropy to test the existence or non-existence of exact recurrences.
    \item In the case of vanishing von Neumann entropy, we need to check the action of $U^n$ on the set of vectors forming a complete basis to confirm the exact recurrences. 
\end{itemize}

Observe that if $\{\lambda_{i}\}_{i=1}^{d}$ are eigenvalues of the matrix $U$ with period $n$, then for every index $i$ we have 
\begin{equation*}
    \lambda_{i}^{n} = \tau    
\end{equation*}
for some fixed phase $\theta$ of the form $\tau = e^{i\theta}$ depending only on $U$. In particular, $\lambda_{i} = \zeta_{n}^{k_i} \tau^{1/n},$ where $\zeta_n$ is a primitive $n^{\text{th}}$ root of unity and $k_{i}$ are integers. The fact that $n$ is the period of $U$ allows us to express $\zeta_n$ purely in terms of $\lambda_1, \ldots, \lambda_d$ and also restricts the possible values of each $k_i$. To achieve this, we need the following lemma.
\begin{lemma} \label{lemma:lemma 1}
    Suppose $U$ is a $d\times d$ periodic matrix with period $n$ and $\lambda_{i} = \zeta_{n}^{k_i} \tau^{1/n}$ are the eigenvalues of $U$ as described above. Then the following are true: 
    \begin{itemize}
        \item[(a)] $R := gcd(k_1 - k_d, \ldots, k_{d-1} - k_d)$ is invertible modulo $n$. In other words, $gcd(R,n) = 1$.
        \item[(b)] There exist integers $a_1, \ldots a_d$ such that $a_1 + \ldots + a_d = 0$ and $a_1 k_1 + \ldots + a_d k_d = 1 \pmod{n}$. 
    \end{itemize}
\end{lemma}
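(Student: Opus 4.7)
The plan is to prove part (a) first by a minimality argument, and then derive part (b) as a direct consequence using Bezout's identity.

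For part (a), I would argue by contradiction. Set $g = \gcd(R, n)$ and suppose $g > 1$. Since $g$ divides $R$ it divides every difference $k_i - k_d$, and since $g$ divides $n$, each ratio $\lambda_i/\lambda_d = \zeta_n^{k_i - k_d}$ is an $(n/g)$-th root of unity. Raising everything to the $(n/g)$-th power therefore gives $\lambda_i^{n/g} = \lambda_d^{n/g}$ for all $i$. Because $U$ is unitary it is diagonalizable, so $U^{n/g}$ has every eigenvalue equal to the single scalar $\lambda_d^{n/g}$, which forces $U^{n/g} = \lambda_d^{n/g} I$. This contradicts $n$ being the \emph{smallest} period of $U$, so we must have $g = 1$, i.e.\ $\gcd(R, n) = 1$.

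For part (b), I would feed the conclusion of (a) into Bezout's identity. Since $\gcd(k_1 - k_d, \ldots, k_{d-1} - k_d, n) = \gcd(R, n) = 1$, there exist integers $c_1, \ldots, c_{d-1}$ (and some multiple of $n$ we can discard) such that
\begin{equation}
    \sum_{i=1}^{d-1} c_i (k_i - k_d) \equiv 1 \pmod{n}.
\end{equation}
Now simply set $a_i = c_i$ for $i = 1, \ldots, d-1$ and $a_d = -\sum_{i=1}^{d-1} c_i$. Then by construction $\sum_{i=1}^{d} a_i = 0$, and a short rearrangement shows $\sum_{i=1}^{d} a_i k_i = \sum_{i=1}^{d-1} c_i (k_i - k_d)$, which is congruent to $1$ modulo $n$.

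The substantive step is part (a); part (b) is then a mechanical application of Bezout. The main thing to be careful about in (a) is the appeal to diagonalizability: if $U$ were not normal, having all eigenvalues of $U^{n/g}$ coincide would not be enough to conclude $U^{n/g}$ is scalar. Since $U$ is unitary this is automatic, but it is the pivot on which the minimality contradiction turns, so I would flag it explicitly rather than leave it implicit.
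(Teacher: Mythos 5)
Your proof is correct and follows essentially the same route as the paper: part (a) uses the same observation that $\lambda_i^{n/\gcd(R,n)} = \lambda_d^{n/\gcd(R,n)}$ forces $U^{n/\gcd(R,n)}$ to be scalar (you phrase it as a contradiction, the paper argues directly, but the algebra is identical), and part (b) is the same Bezout construction. Your part (b) is marginally tidier in that you apply Bezout once to the list $(k_1-k_d,\ldots,k_{d-1}-k_d,n)$ and absorb the multiple of $n$ immediately, whereas the paper first writes $R$ as a Bezout combination of the differences and then multiplies through by the modular inverse $S$ of $R$; and your explicit flag that unitarity (hence diagonalizability) is what licenses the jump from equal eigenvalues of $U^{n/g}$ to $U^{n/g}$ being scalar is a point the paper leaves implicit.
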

\begin{proof}\label{proof:lemma 1}
     (a.) Suppose $R = gcd(k_1 - k_d, \ldots, k_{d-1} - k_d)$, in particular $k_i \equiv k_d \pmod{R}$ for all $1 \leq i \leq d-1.$  write $k_i = k_{d} + Rl_{i}$ for some integers $l_i,$ for all $1 \leq i \leq d-1.$ We need to prove that $gcd(R,n) =1.$
     Observe,
     $\zeta_{n}^{k_i} = \zeta_{n}^{k_d} \zeta_{n}^{Rl_i} $. Since $\lambda_{i} = \zeta_{n}^{k_i} \tau^{1/n}$, we have 
     \begin{align}\label{eq: ratio of two eigne value}
         \frac{\lambda_i}{\lambda_d} &= \frac{\zeta_{n}^{k_i}}{ \zeta_{n}^{k_d}} = \zeta_{n}^{Rl_i}.
     \end{align}
     By raising both sides of Eq. \ref{eq: ratio of two eigne value} to the power of $n/gcd(R,n)$ we get
     \begin{align}
          \bigg(\frac{\lambda_i}{\lambda_d}\bigg)^{n/gcd(R,n)} = \bigg(\zeta_{n}^n\bigg)^{l_i\frac{R}{gcd(R,n)}} = 1,
     \end{align}
    since $\zeta_{n}$ is a $n$-th root of unity and  $Rl_i/gcd(R,n)$ is an integer. It follows that
\begin{equation}\label{eq:diag-power}
\begin{pmatrix}
\lambda_{1} &        &        &        \\
            & \ddots &        &        \\
            &        & \ddots &        \\
            &        &        & \lambda_{d}
\end{pmatrix}^{\tfrac{n}{\gcd(R,n)}}
\;=\;
\begin{pmatrix}
\lambda_{d} &        &        &        \\
            & \ddots &        &        \\
            &        & \ddots &        \\
            &        &        & \lambda_{d}
\end{pmatrix}^{\tfrac{n}{\gcd(R,n)}}
\;=\;
\lambda_{d}^{\tfrac{n}{\gcd(R,n)}}\,I.
\end{equation}
This implies $U^{n/gcd(R,n)} = \tau' I $ for $\tau' = \lambda_d^{n/gcd(R,n)} \in \mathrm{U}(1)$. This is only possible when $gcd(R,n) = 1,$ as $U$ has period $n$.

(b.) By Bezout's identity there exists integers $a_1, \ldots, a_{d-1}$ such that 
\begin{equation}\label{eq: writing Bezout identity for R}
    R := gcd(k_1 - k_d, \ldots, k_{d-1} - k_d) = b_1(k_1 - k_{d}) + \ldots + b_{d-1} (k_{d-1} - k_d) 
\end{equation}
for some integer $b_i$'s.  Define $b_d := -(b_1 +\ldots + b_{d-1}),$ and let $S$ be the inverse of $R$ modulo $n$, i.e., $RS = 1 \pmod{n}$. By multiplying $S$ on both sides of Eq. \eqref{eq: writing Bezout identity for R}, we get $Sb_1(k_1 - k_{d}) + \ldots + Sb_{d-1} (k_{d-1} - k_d) = RS = 1 \pmod{n}$. By defining $a_i :=Sb_i$ for $1\leq i \leq d$, we get 
\begin{align}
    a_1(k_1 - k_{d}) + \ldots + a_{d-1} (k_{d-1} - k_d) &= 1 \pmod{n},  \, \,  \text{since}\\
     \nonumber a_1(k_1 - k_{d}) + \ldots + a_{d-1} (k_{d-1} - k_d) & = a_1k_1 + \ldots + a_{d-1}k_{d-1} - ( a_1 + \dots + a_{d-1})k_d \\
   \nonumber  & = a_1k_1 + \ldots + a_dk_d = 1 \pmod{n},
\end{align}

as required in the lemma. 
\end{proof}

So far, we have related the eigenvalues of a periodic unitary operator $U$ with its hypothetical period $n$. Now, we will give a procedure to find $n$ if it exists. To proceed, we will recall the notions of a field, a field extension, and the \textit{splitting field} of a polynomial. Informally, a field is a set in which addition, (commutative) multiplication, subtraction, and division by any non-zero element are possible. If $L$ and $K$ are two fields, and $K \subset L$ then we say $L$ is a \textit{field extension} (or simply an extension) of $K$ or equivalently we say that $K$ is a \textit{sub-field} of $L$. For our purpose, we only ever need sub-fields of $\mathbb{C}$, the field of complex numbers. Now suppose $p(t)$ is a polynomial with coefficients in a field $K$ and $\lambda_1, \ldots ,\lambda_d$ are its complex roots. The splitting field of $p(t)$ is defined as the field extension $L:=K(\lambda_1,\dots,\lambda_d)$, the smallest field containing $K$ and all the roots of $p(t)$.  
If $K \subset L$ is a field extension, then $L$ is naturally a vector space over the field $K$. Therefore, we can speak of the dimension $dim_{K}(L)$ of $L$ over $K$, which can either be finite or infinite.  It is usually denoted by $[L:K]$ and called the \textit{degree} of $L$ over $K$. If the degree $[L:K]$ is finite, we say that $L$ is a finite extension of $K$. If $K \subset M \subset L$, where $K\subset L$ is a finite extension then $K \subset M$ and $M \subset L$ are finite extensions, and we have the tower law 
\begin{equation}\label{eq: tower law}
    [L:K] = [L:M] [M:K].
\end{equation}

Now we will put the first restriction on our unitary $U$. We assume that all entries of $U$  are algebraic over $\mathbb{Q}$, the field of rational numbers. From now on, we define $K$ as the field $K:= \mathbb{Q}(\{u_{ij}\})$ where $u_{ij}$ are the entries of the unitary $U$.  
By definition, the field $K$ is the smallest field containing the entries of $U$. Since these entries are algebraic over $\mathbb{Q}$, the degree of the field extension $\mathbb{Q}\subset K$, denoted as $[K:\mathbb{Q}]$, is finite.

Recall that the characteristic polynomial of the matrix $U$ is given by  $p_{U}(t) = \text{det}(U-tI)$ and that the eigenvalues $\lambda_1, \ldots, \lambda_d$ are the complex roots of this polynomial. The coefficients of this polynomial are contained in $K$, or in other words, $p_{U}(t)$ is defined over $K$. The splitting field of $p_{U}(t)$ is defined as $L := K(\lambda_1, \ldots, \lambda_d)$. Using the fact that $\lambda_{i} = \zeta_{n}^{k_i} \tau^{1/n}$ and part (b) of Lemma \ref{lemma:lemma 1}, we get
\begin{align} \label{eq: zeta_n in terma of lambda}
  \nonumber  \lambda_1^{a_1} \ldots \lambda_d^{a_d} &= \bigg(\zeta_{n}^{k_1} \tau^{1/n}\bigg)^{a_1} \ldots \bigg(\zeta_{n}^{k_d} \tau^{1/n}\bigg)^{a_d} \\
 \nonumber   &=\zeta_{n}^{\sum a_i k_i} (\tau^{1/n})^{\sum_{i=1}^{d} a_i} \\
    &= \zeta_n.
\end{align}
Since $\zeta_n$ is a product of the eigenvalues, we conclude that it lies in the splitting field $L$.
Furthermore, using  $\lambda_{i} = \zeta_{n}^{k_i} \tau^{1/n}$ and Eq. \eqref{eq: zeta_n in terma of lambda},  $\tau^{1/n}$ can be expressed as ,
\begin{equation}\label{eq: tau in terms of lambdas}
    \tau^{1/n} = \frac{\lambda_i}{(\lambda_1^{a_1} \ldots \lambda_d^{a_d})^{k_i}}
\end{equation}
This shows that $\tau^{1/n}$ also belongs to the splitting field $L$. On the other hand, each $\lambda_{i}$ can be expressed in terms of $\zeta_n$ and $\tau^{1/n}$. Therefore, the field $K(\zeta_n, \tau^{1/n})$ contains all $\lambda_i's$. This implies that the splitting field $L$ defined above can also be expressed as $L = K(\zeta_n, \tau^{1/n}).$ In other words, we have established that the field containing $K$ and the eigenvalues of $U$ is the same as the field containing $K$, $\zeta_n$, and $\tau^{1/n}$. Using this relation, we give a theorem to calculate the allowable periods $n$ of $U$, which in turn depend on the dimension of the field $K$ over $\mathbb{Q}$. 
\begin{theorem}\label{Theorem 1}
    If $U$ is a $d \times d$ unitary matrix whose entries are in a field $K,$ which is a finite extension of $\mathbb{Q}$ and is periodic with period $n$ then $\phi(n)$ divides $d![K:\mathbb{Q}],$ where $\phi(\cdot)$ is the Euler totient function.
\end{theorem}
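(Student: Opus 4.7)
The plan is to combine the identification $L = K(\zeta_n, \tau^{1/n})$ established in the paragraphs above with two standard facts from algebraic field theory: the cyclotomic extension $\mathbb{Q}(\zeta_n)/\mathbb{Q}$ has degree $\phi(n)$, and the splitting field of a degree-$d$ polynomial has degree over the base field dividing $d!$. The argument will assemble these into the claimed divisibility via the tower law.

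First, I would observe from Eq.~\eqref{eq: zeta_n in terma of lambda} that $\zeta_n \in L$, giving the chain of extensions $\mathbb{Q} \subset \mathbb{Q}(\zeta_n) \subset L$. Since the minimal polynomial of $\zeta_n$ over $\mathbb{Q}$ is the $n$-th cyclotomic polynomial, of degree $\phi(n)$, we have $[\mathbb{Q}(\zeta_n):\mathbb{Q}] = \phi(n)$. Applying the tower law (Eq.~\eqref{eq: tower law}) to this chain shows that $\phi(n)$ divides $[L:\mathbb{Q}]$.

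Next, I would apply the tower law to the other chain $\mathbb{Q} \subset K \subset L$ to get $[L:\mathbb{Q}] = [L:K][K:\mathbb{Q}]$. The crucial bound is on $[L:K]$: by construction $L$ is the splitting field over $K$ of the degree-$d$ polynomial $p_U(t)$, and since $K \subset \mathbb{C}$ has characteristic zero, the extension $L/K$ is automatically Galois. The Galois group $\mathrm{Gal}(L/K)$ acts faithfully by permutations on the $d$ roots $\lambda_1, \ldots, \lambda_d$, hence embeds into the symmetric group $S_d$, so $[L:K] = |\mathrm{Gal}(L/K)|$ divides $d!$.

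Putting the pieces together, $\phi(n)$ divides $[L:K][K:\mathbb{Q}]$, and since $[L:K]$ divides $d!$, the quantity $[L:K][K:\mathbb{Q}]$ divides $d![K:\mathbb{Q}]$; transitivity of divisibility then yields $\phi(n) \mid d![K:\mathbb{Q}]$, as required. The only nontrivial conceptual step is the bound $[L:K] \mid d!$: it relies on characteristic zero to ensure separability (so that the splitting field is Galois with $|\mathrm{Gal}(L/K)| = [L:K]$) together with the faithful permutation action on the $d$ roots. Everything else is bookkeeping with the tower law on the two chains that meet at $L$.
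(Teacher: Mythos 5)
Your argument is correct and follows the paper's proof essentially verbatim: both establish $\phi(n) \mid [L:\mathbb{Q}]$ via the chain $\mathbb{Q} \subset \mathbb{Q}(\zeta_n) \subset L$ and the tower law, and both establish $[L:\mathbb{Q}] \mid d!\,[K:\mathbb{Q}]$ via the chain $\mathbb{Q} \subset K \subset L$ together with the splitting-field bound $[L:K] \mid d!$. The only difference is that you spell out the Galois-theoretic justification for that splitting-field bound (separability in characteristic zero, faithful permutation action on the roots, embedding into $S_d$), where the paper simply cites a standard reference.
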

\begin{proof}\label{proof:Theorem 1}
    The characteristic polynomial $p_{U}(t)$ of $U$ has coefficients in the field $K$ and has degree $d$.  
    The dimension $[L:K]$ of the splitting field $L$ of $p_{U}(t)$ over $K$ divides $d!$ \cite{dummit_foote_abstract_2003}. Using the tower law \eqref{eq: tower law}, we have $[L:K] = \frac{[L:\mathbb{Q}]}{[K:\mathbb{Q}]}$. Therefore, $\frac{[L:\mathbb{Q}]}{[K:\mathbb{Q}]}$ divides $d!$, which implies that $[L:\mathbb{Q}]$ divides $[K:\mathbb{Q}]d!.$ 
    
    Now to prove the theorem, it is enough to show that $\phi(n)$ divides $[L:\mathbb{Q}]$. Recall that $\zeta_n \in L$ (as observed using Eq.\ \eqref{eq: zeta_n in terma of lambda}) implies that  $\mathbb{Q} \subset \mathbb{Q}(\zeta_n) \subset L$. It is a well-known fact that $[\mathbb{Q}(\zeta_n):\mathbb{Q}] = \phi(n)$; see Appendix \ref{A1:algebraic externsions} for more information. Again using the tower law, $[L:\mathbb{Q}] = [L:\mathbb{Q}(\zeta_n)][\mathbb{Q}(\zeta_n):\mathbb{Q}]$. This implies, $\phi(n)$ divides $[L:\mathbb{Q}]$. Therefore, $\phi(n)$ divides $[K:\mathbb{Q}]d!.$ 
\end{proof}

\section{Application to single particle system: the quantum kicked top}\label{sec: QKT}

The quantum kicked top (QKT) serves as a finite-dimensional dynamical framework for investigating quantum chaos, known for its compact phase space and parameterizable chaoticity structure \cite{haake_1987}. This time-dependent system is periodically driven and governed by the Hamiltonian
\begin{equation}\label{eq:QKT_Hamiltonian}
 H = \hbar\frac{\alpha J_{y}}{T} +  \hbar \frac{\kappa J_{z}^2}{2j} \sum_{n=-\infty}^{\infty} \delta (t-nT).
\end{equation}
Here $\{J_{x}, J_{y}, J_{z}\}$ denote the generators of angular momentum, satisfying the commutation relation $[J_i, J_k] = i\epsilon_{ikl} J_l$.  The eigenstates of the $z$-rotation generator, $J_z\ket{j,m} = m\ket{j,m}$, are called Dicke states, and serve as the basis in which we will perform many of the calculations to follow.  The model describes a spin of size $j$ undergoing precession about the $y$-axis, accompanied by impulsive, state-dependent twists around the $z$-axis, characterized by the chaoticity parameter $\kappa$. The time interval between these impulsive kicks is denoted by $T$, and $\alpha$ represents the extent of $y$-precession within one period. The total angular momentum, $j$, remains conserved throughout the dynamics. The classical kicked top can be obtained by computing the Heisenberg equations for the re-scaled angular momentum generators, $X_{i}=J_i/j$, satisfying $[X_i, X_k] = (1/j)\epsilon_{ikl} X_l$ and followed by the limit $j \to \infty$ \cite{haake_1987}. In the classical limit, the system shows mixed dynamics as $\kappa$ increases from 0, and transitions to globally chaotic behavior for $\kappa \gtrsim 4.4$ \cite{kumari_2019_quantumclassical}.

The Floquet time evolution operator for a single period is given by:
\begin{equation}\label{eq:floquet_unitary QKT}
    U = \exp\Big(-i\frac{\kappa }{2j}J_{z}^2\Big) \exp\Big(-i\alpha J_{y}\Big).
\end{equation}
Here the first part of the Eq.\ \eqref{eq:floquet_unitary QKT} represents a non-linear operation (twist) about the $z$-axis and the second part represents a rotation operation about the $y$-axis. The Floquet operator $U$ is here factored as $U := U_{\kappa} U_{\alpha}$ where $U_{\kappa} = \exp\Big(-i\frac{\kappa }{2j}J_{z}^2\Big)$ is diagonal in the $z$-axis/Dicke basis, with $U_{\kappa} = diag(\mu^{a_1}, \ldots, \mu^{a_d})$ where $\mu := e^{-i \kappa/8j}$ depending on a real parameter $\kappa$ and $a_{1}, \ldots, a_{d}$ are fixed integers which depend on $j$. The $y$-rotation matrix $U_{\alpha} = \exp(-i\alpha J_{y})$ belongs to $\mathrm{SU}(2j+1)$. In the Dicke basis, it is referred to as the \textit{Wigner D-matrix} \cite{Biedenharn_Louck_Carruthers_1984}. By taking $\kappa$ and $\alpha$ as a rational multiple of $\pi$, the unitary $U$ given in Eq.\eqref{eq:floquet_unitary QKT} satisfies the condition of being algebraic entries. Therefore, our method of studying the recurrences can be directly applied to the kicked top unitary. 

To test whether a candidate recurrence time is plausible, we leverage additional structure of general spin-$j$ systems \cite{Biedenharn_Louck_Carruthers_1984}. 
In particular, there is an isomorphism between the Hilbert space of a spin-$j$ system and the symmetric subspace of the Hilbert space of $N=2j \in \mathbb N$ qubits:
\begin{equation}
    \mathbb C^{2j+1} \simeq \mathrm{Sym}((\mathbb C^2)^{\otimes N}) \subset (\mathbb C^2)^{\otimes N}. 
\end{equation}
Physically, the many-qubit picture of spin can be seen as the highest-weight multiplet formed by coupling $N=2j$ spin-$\frac12$ particles.  Mathematically, this isomorphism can be seen by identifying the Dicke basis $\ket{j,m}$ with the symmetrized computational basis $\ket{D_N^{(w)}}$:
\begin{equation}\label{eq:dicke_qubits}
    \ket{j,m} \simeq \ket{D_N^{(w)}} := \frac{1}{\sqrt{\binom{N}{w}}} \sum_{\pi} \ket{\underbrace{1 \cdots 1}_{w} \, \underbrace{0\cdots 0}_{N-w} },
\end{equation}
where $w=j-m$ is the Hamming weight $w \in \{0,...,N\}$, the spin-up state $\ket{\frac12, \frac12}$ is by convention mapped to the computational zero state $\ket{0}$, and the sum runs over distinct permutations of the $N$ qubits.  For example, the Dicke state $\ket{\frac32, \frac12}$ is identified with the standard $W$ state $\frac{1}{\sqrt{3}}(\ket{100}+\ket{010}+\ket{001})$ of entanglement theory \cite{3qubits2ways_Dur_2000}.  Moreover, the collective spin observables $J_i$ become the tensor product representation of the $N$-qubit spin observables:
\begin{equation}\label{eq:spin-observables-from-qubits}
    J_i \simeq S_i := \sum_{k=1}^{N} \frac{\sigma^{(k)}_i}{2}, \qquad \sigma^{(k)}_i = I \otimes \cdots \otimes \underbrace{\sigma_i}_{\mathclap{k\text{-th tensor factor}}} \otimes \cdots \otimes I.  
\end{equation}
One can readily show from \eqref{eq:dicke_qubits} and \eqref{eq:spin-observables-from-qubits} that $S_z\ket{D_N^{(w)}} = (\frac{N}{2}-w)\ket{D_N^{(w)}}$ as expected, and that the Pauli commutation relations imply the set $\{S_i\}$ realizes the $\mathfrak{su}(2)$ algebra, $[S_i,S_j] = i\epsilon_{ijk} S_k$.   This algebra acts irreducibly in the symmetric subspace, and it can be proved that this isomorphism is an intertwiner of SU(2) irreps \cite{harrow2013church}.  This means the Floquet unitary associated with the quantum kicked top in the qubit picture is simply Eq.\ \eqref{eq:floquet_unitary QKT} with $\{S_i\}$ replacing $\{J_i\}$.

The key benefit of this correspondence is that it allows an interpretation of collective spin-$j$ properties in terms of the multi-partite entanglement structure in a many-body system \cite{Giraud_Tensor_PRL_2015}.  In particular, consider the set of spin coherent states,
\begin{equation}
    \ket{\theta,\phi} := e^{-i\theta(J_y \cos \phi - J_x \sin\phi)} \ket{j,j},
\end{equation}
where the unitary is the rotation that takes the North Pole to the point $(\theta,\phi) \in S^2$ in spherical coordinates.  In the qubit picture, Eq.\ \eqref{eq:spin-observables-from-qubits} implies that such a rotation amounts to the simultaneous rotation of each qubit to the same point on the Bloch sphere,
\begin{equation}
\begin{aligned}
    \ket{\theta,\phi} &\simeq \Big(e^{-i\frac{\theta}{2}(\sigma_y \cos \phi - \sigma_x \sin\phi)} \ket{0} \Big)^{\otimes N} \\
    &= \Big( \cos \frac{\theta}{2} \ket{0} + e^{i\phi} \sin \frac{\theta}{2} \ket{1} \Big)^{\otimes N}.
\end{aligned}
\end{equation}
Since the many-qubit state must be permutation-invariant, this establishes the equivalence between spin coherent states and the set of symmetric product states $\{\ket{\psi_{\text{prod}}}\}$; see \cite{Arecchi1972SCS} for the original discussion.  Hence, for a quick way to rule out a candidate periodicity $n$, we compute the single-qubit von Neumann entropy between the $2j$ qubits to test whether it vanishes exactly, provided we start with a symmetric product state. A non-vanishing entanglement entropy of the state $U^n\ket{\psi_{\text{prod}}}$ implies that $U^n\ket{\psi_{\text{prod}}} \neq e^{i\theta}\ket{\psi'_{\text{prod}}}$ for any other product state $\ket{\psi'_{\text{prod}}}$ or phase $e^{i\theta}$, which implies that $U^n \neq \tau I$ and so invalidates the candidate $n$.
The single-qubit entanglement entropy
\begin{equation}\label{eq:entanglement entropy}
S(\rho_1) = -\text{Tr}[\rho_1 \log\rho_1], 
\end{equation}
of any one of the reduced constituent qubits $\rho_1$ can be computed in terms of the global spin observables via \cite{Baguette_MM_reductions_2014}:
\begin{equation}
\label{e: one qubit density matrix}
    \rho_1 = \frac{1}{2}
    \begin{pmatrix}
    1 - \langle J_z \rangle & \langle J_- \rangle \\
    \langle J_+ \rangle & 1 + \langle J_z \rangle
    \end{pmatrix}, \qquad J_\pm = J_x \pm i J_y.
\end{equation} 
If however the von Neumann entropy does vanish, all we may conclude is that the product state was mapped to another product state, but not necessarily the same one.  In this case, we would then, for example, need to analytically apply the given unitary with the corresponding power, $U^n$, to a suitable basis in order to confirm or reject the exact recurrence.

\subsection{Case of spin 3/2}
To illustrate our framework, we study the quantum kicked top for spin $j=\frac{3}{2}$ and calculate the set of possible $n$ for which Eq.\eqref{eq:floquet_unitary QKT} is periodic or quasi-periodic. After determining the set of integers $n$ for which the Floquet unitary may be periodic, we must either confirm or rule out each $n$.  The Floquet unitary for the $j=3/2$ QKT with $\alpha=\pi/2$ (a commonly used choice) can be given by two $4 \times 4$ matrices:
\begin{equation}
U=U_{\kappa}U_{\alpha}, \qquad  U_{\kappa} = \begin{pmatrix} \mu^9 & 0 & 0 & 0 \\ 0 & \mu & 0 & 0 \\ 0 & 0 & \mu & 0 \\ 0 & 0 & 0 & \mu^9\end{pmatrix}, \qquad U_{\alpha} = \begin{pmatrix} a & -b & b & -a \\ b & -a & -a & b \\ b & a & -a & -b \\ a & b & b & a\end{pmatrix},
\end{equation}
where $a = \frac{1}{2\sqrt{2}}$, $b = \frac{1}{2} \sqrt{\frac{3}{2}}$, $\mu = e^{\frac{-ik}{12}}$, and $k \in \mathbb R$. Note that $\mu^9 = e^{\frac{-i3k}{4}}$ and that $\det(U_{\alpha}) = 4(a^2 + b^2)^2 = 4 (\frac{1}{2})^2 = 1$. Therefore, $\text{det}(U) = \mu^{20}$. The characteristic polynomial of $U$ is $p_{U}(t) = \text{det}(tI - U)$, which is formally a polynomial in two variables, $\mu$ and $t$. In fact, it can be explicitly computed as 
\begin{equation}\label{eq:polynomial for QKT unitary}
    p_{U}(t) = 4\mu^{20}(a^2 + b^2)^2 + 4a(\mu^{8} - 1)\mu^{11} t(a^2 + b^2) + 2a^2 (\mu^8 - 1)^2 \mu^2 t^2 - 2a(\mu^8 - 1 )\mu t^3 + t^4,
\end{equation}
and simplified to
\begin{equation}\label{eq: simplified polynomial for QKT unitary}
   p_{U}(t) = \mu^{20} + 2a(\mu^{8} - 1)\mu^{11} t + 2a^2 (\mu^8 - 1)^2 \mu^2t^2 - 2a(\mu^8 - 1 )\mu t^3 + t^4 
\end{equation}
upon substituting $a^2 + b^2 = \frac{1}{2}$.  The polynomial $p_{U}(t)$ can now be viewed as a polynomial in the variable $t$ of degree $4$ over the field $K := \mathbb{Q}(a,b,\mu) = \mathbb{Q}(\sqrt{2}, \sqrt{3}, \mu)$. By adjoining roots of the polynomial $p_{U}(t)$ to the field $K$, we create the splitting field $L$. In this case, it is given by $L=K(\lambda_1,\lambda_2,\lambda_3,\lambda_4)$, where $\lambda_i's$ are the eigenvalues of unitary $U$ given in Eq. \eqref{eq:floquet_unitary QKT}. This puts us in a position to apply Theorem \ref{Theorem 1}.
To this end, we now assume that $U$ is periodic (or quasi-periodic) with period $n$. Using Theorem \ref{Theorem 1} we get that $\phi(n)$ divides $[K:\mathbb{Q}]d! = [\mathbb{Q}(\sqrt{2}, \sqrt{3}, \mu) : \mathbb{Q}]4!.$ 

Here, $\kappa$ may be chosen as either a rational or an irrational multiple of $\pi$. When $\kappa$ is an irrational multiple of $\pi$, the parameter $\mu$ can be either algebraic or transcendental; however, in this case exact recurrences do not occur, and the system’s dynamics densely fill the accessible phase space. Consequently, it suffices to restrict attention to the case where $\kappa$ is a rational multiple of $\pi$.
Now if we consider $\kappa$ as a rational multiple of $\pi$ i.e., $\kappa=\frac{p}{q}\pi$, we define $m:=\frac{24q}{gcd(24,p)}$. This makes $\mu$ a primitive $m^{\text{th}}$ root of unity. Therefore, $[K:\mathbb{Q}] = l\, \phi(m)$, where $l\in{(1,2,4)}$ depending upon the value of $\mu$. Finally, we get that $\phi(n)$  must divide $24 \,l\,\phi(m)$. 

For a specific chaoticity value as an example, taking $\kappa=j\pi$ for $j=3/2$ gives us $m=16$. Therefore, we get $\mu = e^{-i\frac{2\pi}{16}} = \zeta_{16}$. The degree of $[K:\mathbb{Q}] = [\mathbb{Q}(\sqrt{2},\sqrt{3},\zeta_{16}):\mathbb{Q}]=2 \, \phi(16) = 16$. Therefore, $\phi(n)$ divides $24\times16$. Using this, we find the set of all possibles $n$ which here is found to be a set of size 141 with the largest element being 1680. Upon explicit calculations done in the Dicke basis, we find that the unitary given in Eq. \eqref{eq:floquet_unitary QKT} for $j=3/2$ is quasi-periodic with period $n=12$ with $\kappa=j\pi$ and $\alpha=\pi/2$.

For the case of $\kappa=\frac{j \pi}{2}$, we have $\mu = \zeta_{32} = e^{\frac{2\pi i}{32}}$. Here, $m = 32,$ which gives us $\phi(32) = 16$. Therefore,  $\phi(n)$ divides $[K:\mathbb{Q}]\,d!= 2\times16 \times 24 = 768.$ Since $n$ is upper bounded by $2\times \phi(n)^2$, the maximum value of $n$ for which the unitary can be periodic (or quasi-periodic)is $2\times 768^2$. In this case, the set of admissible $n$ contains 183 elements, with the largest value being 3570. We have found that for every such possible $n$, the unitary generates a non-zero amount of entanglement entropy, see Eq.  \eqref{eq:entanglement entropy}, between the virtual qubits, and therefore cannot be associated with a recurrence.  This rigorously shows that there is no $n$ for which the spin $j=\frac{3}{2}$ Floquet operator at $\kappa= j \pi/2$ and $\alpha=\pi/2$ is periodic (or quasi-periodic). Our results are consistent with \cite{anand_davis_ghose_2024}, where the recurrence for $\kappa=j\pi$ was rigorously proven, but the lack of recurrence at $\kappa=j\pi/2$ was only numerically suggested.
We note that this provides a clear and rigorous example showing that, even when Hamiltonian parameters are rational multiples of  $\pi$, exact quantum recurrences are not guaranteed.

\subsection{Extension to periodically driven many-body systems}\label{sec: applications many-body }

In this section, we show that our method can be applied to many-body Floquet systems with stepwise drive. Consider a family of $T$-periodic Floquet Hamiltonians given by
\begin{align}  
\label{eq:QKT_Hamiltonian}
 H(t) & = H_1(t), \,\, nT \leq t \leq t_1 \\
  \nonumber  & = H_2(t), \,\, t_1 \leq t \leq (n+1)T
\end{align}
where $T$ is the Floquet time period and $[H_1,H_2] \neq 0$. Such systems are modelled by a unitary matrix which is of the form $U = U_1 U_2$, where $U_i = exp(-\frac{i}{\hbar}\int H_i dt)$ are unitaries. 
Similar to the single-particle system, the main idea of taking the unitary as a product of two unitaries is that we can study the exact recurrences in the dynamics depending on two different parameters of $U_1$ and $U_2$ separately.
In our formalism, we have taken $U_1$ to depend on different $\kappa_i$ and $U_2$ to have algebraic entries over $\mathbb{Q}$. 
More generally, we may assume $U_1$ is diagonal by simply changing the basis to one that diagonalizes $U_1$. The eigenvalues ($\mu_i$), i.e., the diagonal entries of $U_1$, are of the form $e^{-i \kappa_i}$ for some real parameter $\kappa_i$. We assume that all such $\kappa_i$ are rational multiples of $\pi$, or equivalently, that all the eigenvalues of $U_1$ are roots of unity. In particular, this implies that all entries of $U_1$ are algebraic over $\mathbb{Q}$. Furthermore, we assume that $U_2$ is in $\mathrm{SU}(d)$, the group of special unitary $d \times d$ matrices whose entries are also algebraic over ${\mathbb{Q}}$ (see \ref{sec: appendix}). Multiple important many-body systems, such as central spin models \cite{biswas_2025_the} and Ising models \cite{bukov_many_body_2021} satisfy the above restrictions. Our method can hence be directly applied to such models for finding exact quantum recurrences, if they exist.

\section{Summary and Discussion}\label{sec: discussion}

Previous studies have examined the Poincaré recurrence theorem in quantum systems, where recurrence is understood as the ability of a state to return close to its initial state, as measured by a suitable distance on Hilbert space. In non-integrable systems, recurrence times typically scale doubly exponentially with the system size, and exponentially in integrable ones \cite{venuti_2015_the}. 
In this work, we focused on exact, state-independent recurrences and presented a method for identifying them based on an arithmetic and number-theoretic study of Floquet unitaries.  Rather than requiring explicit expressions for the corresponding matrix elements, we only needed to identify the number field to which they belong. A key observation is that the structure of the solution depends on the dimension $d$ of the system, encapsulated in the condition that $\phi(n)$ divides $[K:\mathbb{Q}]d!$. 
In models where increasing $d$ corresponds to a classical limit,
this upper bound on possible recurrence times diverges. However, this does not contradict the quantum Poincaré recurrence theorem, which addresses approximate and state-dependent recurrences.

We applied this method to a well-known $\delta$-kicked model: the quantum kicked top, an angular momentum system. For a given spin (angular momentum) value, the Hamiltonian is parameterized by two dimensionless quantities, $\alpha$ and $\kappa$. We investigated exact recurrences in the quantum kicked top for $\alpha = \pi/2$ across all values of $\kappa$ for $j = 3/2$. We showed that, despite the Hilbert space being 4-dimensional, there are many values of $\kappa$ for which the system is neither quasi-periodic nor periodic. 
Using a similar method, we also studied the kicked top for $j = 2$, with $\alpha = \pi/2$ and various values of $\kappa$. We studied 500 different values of $\kappa$ of the form $\kappa=\frac{p}{q} \pi \in [0,4j\pi]$ and report that no additional recurrences were found beyond those previously identified in \cite{anand_davis_ghose_2024}.
Moreover, previous studies have argued that the rationality of kicked top parameters is sufficient to ensure exact recurrences \cite{Ruebeck_Pattanayak_2017}. In contrast, our earlier work \cite{anand_davis_ghose_2024} gave numerical evidence that this condition is in fact insufficient. Here we strengthened that statement to a rigorous proof, showing that even in a deeply quantum regime (i.e.\ a low-spin kicked top) with rational Hamiltonian parameters, exact recurrences can be impossible to realize.  This highlights the nontrivial and highly constrained nature of quantum recurrences and how abstract number-theoretic properties of a quantum mechanical system can manifest into global, state-independent statements about its dynamics.  Interesting future work could be to determine precisely what physical or mathematical property of a Floquet unitary guarantees an exact recurrence.

The method developed here can in principle be applied to study the periodic behavior of any finite-dimensional Floquet quantum system. While the complexity of the analysis increases with the number of free parameters, most Floquet systems used to investigate non-equilibrium phenomena involve a single tunable parameter that influences the dynamics, with other parameters held fixed, making our method an efficient tool for analyzing such systems. 
Although our study is not focused on quantum chaos, another important implication is that the presence of exact recurrences effectively rules out chaotic behavior for the corresponding parameter values.
Our framework can act as a clear and novel probe into testing the absence of chaos.
Understanding these periodic structures—especially in low-dimensional systems—can help in designing more effective experiments to probe quantum chaos. 

The exact recurrences studied in this work apply only to periodically driven \textit{closed} systems governed by unitary dynamics. In general, recurrence theorems are typically studied for conservative systems, both classically and quantum-mechanically. 
The presence of external noise or dissipation renders the system open, leading to non-unitary evolution and mixed quantum states. As a result, information is lost and exact recurrences cannot be expected to survive in general.
Nevertheless, one may still ask how close a system can return to its initial state in the presence of noise or dissipation \cite{liu_recurrence_open_quantum_2024}. 
Future work addressing this question in a state-independent setting would require applying our algebraic framework to the theory of quantum channels.  This would entail finding a suitable definition for the periodicity or quasi-periodicity of a channel.  Another possibility in the context of certain structured noise models is to restrict attention to a decoherence-free subspace (an invariant sector of Hilbert space where the noise acts trivially; see also \cite{Bartlett_superselection_2007}) and investigate any effective unitary dynamics therein.

Finally, the presence of exact quantum recurrences in periodically driven systems can play an important role in time-sensitive quantum technologies. These recurrences provide well-defined time scales that 
serve as natural synchronizing points for quantum sensing protocols, and can enhance precision in parameter estimation tasks. Such dynamics have been proposed for achieving Heisenberg-limited sensitivity~\cite{zou_2025_enhancing,biswas_2025_the} where one can ignore the intermediate dynamics and focus solely on the recurrence times. The ability to predict the exact recurrence times and preserve initial coherence can also be useful in developing different quantum thermodynamic devices such as Floquet-based thermal machines~\cite{ronnie_thrermo_2013,scopa_lindbladfloquet_2018}. Our results contribute to this line of research by identifying parameter regimes where coherent recurrence phenomena persist even in non-integrable settings.

\appendix
\section{Field-Theoretic and Cyclotomic Preliminaries}\label{sec: appendix}

In this section, we introduce the basic notions of fields, field extensions, field degree and some of their properties. We will define all the field theoretic notions invoked in the paper and explain some examples. In this appendix, we will also state the main facts we use in the paper with some standard references for proofs. 

Informally, a field is a set in which addition, (commutative) multiplication, subtraction and a division by any non-zero element are possible. If $L$ and $K$ are two fields, and $K \subset L$ then we say $L$ is a field extension (or simply an extension) of $K$ or equivalently we say that $K$ is a sub-field of $L$. In this paper, we only ever need sub-fields of $\mathbb{C}$, the field of complex numbers. In particular, the characteristic of all the fields we deal with in the paper is $0.$

\subsection{Degree of a field extension and the tower law}

We assume that the reader is familiar with the notion of vector space over an arbitrary field. If $L \supset K$ is a field extension then $L$ is naturally a vector space over the field $K$. Therefore, we can speak of the dimension $dim_{K}(L)$ of $L$ over $K$, which can either be finite or infinite, is usually denoted by $[L:K]$ and called the \textit{degree} of $L$ over $K$. If the degree $[L:K]$ is finite, we say that $L$ is a finite extension of $K$. If $L \supset M \supset K$, where $L\supset K$ is a finite extension then $L \supset M$ and $M \supset K$ are finite extensions and we have the tower law  $$[L:K] = [L:M] [M:K].$$

\subsection{Algebraic extensions, simple extension and primitive elements}\label{A1:algebraic externsions}

Given a field extension $L$ over $K$ we say that an element $\alpha \in L$ is \textit{algebraic} over $K$ or an \textit{algebraic number} over $K$, if it satisfies a polynomial $$\alpha^n + a_1{\alpha}^{n-1} + \ldots + a_n = 0,$$
where the coefficients $a_i$ are in the field $K$. Note that any $a \in K$ is algebraic over $K$ since it satisfies the polynomial $x - a$.   If every element of $L$ is algebraic over $K$ then we say that $L$ is an \textit{algebraic} extension of $K$. A finite extension is always an algebraic extension since if say $L\supset K$ is finite dimensional then for any $\alpha \in L$ there exists an $n \in \mathbb{{N}}$ such that $1, \alpha, \ldots, \alpha^n$ are linearly dependent over $K$, i.e., $a_n \alpha^n + \ldots + a_0 = 0$ for some $a_0, \ldots, a_n \in K,$ not all $0$. Therefore, $\alpha$ is algebraic over $K$.  On the other hand, there exists infinite extensions which are algebraic, for instance the fields $L = \mathbb{Q}(S) \supset K = \mathbb{Q}$, where $S$ is the set of \textit{all roots of unity}, i.e., all $a \in \mathbb{C}$ such that $a^n = 1$ for some $n\geq 1.$ The meaning of `$\mathbb{Q}(S)$' is the smallest sub-field of $\mathbb{C}$ containing field $\mathbb{Q}$ and the set $S.$

\begin{example}
    Consider the field extension $\mathbb{C} \supset \mathbb{R}$. The element $i \in \mathbb{C}$ is algebraic over $\mathbb{R}$ since it satisfies $i^2 + 1 = 0.$ Moreover, any $\alpha = a + i b \in \mathbb{C}$ with $a,b \in \mathbb{R},$ satisfies $(\alpha - a)^2 + b^2 = 0,$ a polynomial of degree $2$ with coefficients in $\mathbb{R}.$
\end{example}

Suppose $L \supset K$ is a field extension and $\alpha \in L$ an algebraic number over $K$. We define $K(\alpha)$ as the smallest subfield of $L$ containing both $K$ and $\alpha$, i.e., $K(\alpha)$ contains $K$ and $\alpha$, and if $M \supset K$ is another subfield of $L$ containing $K$ and $\alpha,$ then $M \supset K(\alpha).$ One can describe $K(\alpha)$ also as $$K(\alpha) = \{ \sum_{i=0}^{N} c_{i} \alpha^{i} : c_{i} \in K\}$$
the set of all polynomials in $\alpha.$ One can prove that every non-zero element of the form $f(\alpha) = \sum_{i=0}^{N} c_{i} \alpha^{i} $ is, in fact, invertible, i.e., there exists $g(\alpha) = \sum_{i=0}^{N} c_{i}' \alpha^{i}$ such that $f(\alpha)g(\alpha) = 1.$ Therefore, $K(\alpha)$ is a field.

Indeed, if $f(\alpha) \neq 0$ then $f(x)$ and the minimal polynomial $p_{\alpha}(x)$ of $\alpha$ are co-prime. This is because the minimal polynomial is irreducible over $K.$ Therefore, there exists $g(x), h(x) \in K[x]$ such that $f(x)g(x) + h(x)p_{\alpha}(x) = 1, i.e.,$ $$f(x)g(x) \equiv 1 \pmod{p_{\alpha}(x)}.$$
It follows that $f(\alpha)g(\alpha) = 1.$

The following is a well-known theorem, 
\begin{theorem} Let $K(\alpha)\supset K$ be a finite extension, or equivalently $\alpha$ is algebraic over $K$ then 
    $$[K(\alpha):K] = deg(p_{\alpha}(x)),$$
   where $p_{\alpha}(x) \in K[x]$ is the minimal polynomial of $\alpha.$ 
\end{theorem}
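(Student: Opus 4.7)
The plan is to produce the standard argument that $\{1,\alpha,\alpha^2,\ldots,\alpha^{n-1}\}$ is a $K$-basis of $K(\alpha)$, where $n := \deg(p_{\alpha}(x))$. Since the degree of a field extension is by definition the dimension of the bigger field as a vector space over the smaller, exhibiting such a basis immediately gives $[K(\alpha):K] = n$.

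First I would verify the spanning property. By the description of $K(\alpha)$ already recalled in the excerpt, every element is of the form $f(\alpha) = \sum_{i=0}^{N} c_i \alpha^i$ for some $c_i \in K$. Using the division algorithm in the polynomial ring $K[x]$ (which applies because $p_{\alpha}(x)$ is monic), write $f(x) = q(x)\,p_{\alpha}(x) + r(x)$ with either $r = 0$ or $\deg(r) < n$. Evaluating at $x = \alpha$ and using $p_{\alpha}(\alpha) = 0$ gives $f(\alpha) = r(\alpha)$, which lies in the $K$-span of $\{1,\alpha,\ldots,\alpha^{n-1}\}$. Hence these powers span $K(\alpha)$ over $K$.

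Next I would verify linear independence by a direct appeal to the minimality of $p_{\alpha}(x)$. Suppose there exist $c_0,\ldots,c_{n-1} \in K$, not all zero, with
\begin{equation}
c_0 + c_1 \alpha + \cdots + c_{n-1}\alpha^{n-1} = 0.
\end{equation}
Then the nonzero polynomial $g(x) := c_0 + c_1 x + \cdots + c_{n-1} x^{n-1} \in K[x]$ has degree strictly less than $n$ and annihilates $\alpha$. After dividing $g$ by its leading coefficient to make it monic, we obtain a monic polynomial of degree less than $n$ with $\alpha$ as a root, contradicting the defining property of $p_{\alpha}(x)$ as the monic polynomial of least degree annihilating $\alpha$. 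Therefore $\{1,\alpha,\ldots,\alpha^{n-1}\}$ is linearly independent over $K$.

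Combining the two steps, $\{1,\alpha,\ldots,\alpha^{n-1}\}$ is a basis, so $[K(\alpha):K] = n = \deg(p_{\alpha}(x))$. The only mildly nontrivial point is the spanning step, which rests on two facts that are essentially baked into how $K(\alpha)$ and $p_{\alpha}(x)$ are set up: that $K(\alpha)$ coincides with $K[\alpha]$ (because $K[\alpha]$ is already a field, as recalled in the excerpt via the coprimality of $f(x)$ and $p_{\alpha}(x)$), and that $K[x]$ admits a Euclidean division algorithm. Neither is a genuine obstacle here; the whole argument is a clean bookkeeping of these two inputs.
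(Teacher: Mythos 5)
Your proof is correct and is the standard argument. Note, however, that the paper does not actually supply a proof of this theorem: it labels it ``well known'' and defers to the cited references (Dummit--Foote, Lang). What the paper \emph{does} set up around the statement---that $K(\alpha)$ equals the ring $K[\alpha]$ of polynomial expressions in $\alpha$, via the coprimality/B\'ezout argument showing nonzero $f(\alpha)$ is invertible---is exactly one of the two inputs you identify as ``baked in.'' Your write-up then completes the picture by running the Euclidean division argument for spanning and the minimality argument for independence, which is the canonical route and is airtight. In short: you haven't diverged from the paper's approach, because the paper has no approach of its own here; you have supplied the standard proof it omits, using precisely the scaffolding the paper had already laid down.
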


\subsection{Roots of unity and Cyclotomic fields}

In this subsection, we quickly recall some facts about the cyclotomic fields, i.e., the fields $\mathbb{Q}(\zeta_n)$ generated by primitive $n$-th roots of unity $\zeta_n$ for integers $n \geq 1.$ Again, detailed proofs of the facts stated here can be found in \cite{dummit_foote_abstract_2003} and \cite{lang_2002_algebra}.

 An $n$-th root of unity is a complex number $\alpha$ satisfying $\alpha^n = 1.$ We say that an $n$-th root of unity $\alpha$ is a primitive $n$-th root of unity if $\alpha^k \neq 1$ for $1 \leq k < n.$ There are precisely $\varphi(n)$ primitive $n$-th roots of unity. By definition, the Euler totient function $\varphi(n)$ counts the number of positive integers less than $n$ which are coprime to $n$. This follows from the fact that if $\zeta_n$ denotes a primitive $n$-th root of unity, then for any $k$ relatively prime to $n,$ $\zeta_n^{k}$ is also a primitive $n$-th root of unity. 

 Now let us recall that the $n$-th cyclotomic polynomial $$\varPhi_n(x) := \prod_{1 \leq k < n : gcd(k,n)=1} (x - \zeta_{n}^{k})$$
has \textit{integer coefficients} and that it is the minimal polynomial of $\zeta_n$, therefore irreducible over $\mathbb{Z}$. Note that $deg(\varPhi_n(x)) = \varphi(n)$. One can algorithmically obtain $\varPhi_n(x)$ by the inductive use of the formula $x^n - 1 = \prod_{d |n} \varPhi_d(x).$ 

 \begin{example}
     \begin{itemize}
         \item[1.] $n=4$, $\zeta_4 = i$ and $\varPhi_{4}(x) = x^2 +1.$
         \item [2.] $n=p,$ a prime then $\varPhi_{p}(x) = x^{p-1} + \ldots + x + 1.$
         \item[3.] $n=9,$ $\varPhi_9(x) = x^6 + x^3 + 1.$ 
     \end{itemize}
 \end{example}

The fact that $\varPhi_n(x)$ is the minimal polynomial of $\zeta_n$
implies the following result, $[\mathbb{Q}(\zeta_n) : \mathbb{Q}] = \varphi(n).$ For detailed proof of these results, see Theorem 3.1 of \cite{lang_2002_algebra}.

\subsection{Wigner D matrix}

The rotation matrix in Eq.\eqref{eq:floquet_unitary QKT} belongs to  $SU(r)$, which is here parameterized by the Euler angles ($\theta,\phi,\varphi)$ according to the z-y-z convention. That is, an arbitrary rotation is defined as 
\begin{equation}\label{eq:general rotation}
    \mathcal{R}(\theta,\alpha,\varphi) = e^{-i\theta J_z} e^{-i\phi J_y} e^{-i\varphi J_z}
\end{equation}
where $\theta \in [0,2\pi], \alpha \in \mathbb{R} \, \text{mod} \, 2\pi$ and $\varphi \in \mathbb{R}  \, \text{mod}\,  2\pi$.  When $\mathcal{R}$ is expressed in the Dicke basis it is referred to as the \textit{Wigner D-matrix} \cite{Biedenharn_Louck_Carruthers_1984}. The matrix elements are given by
\begin{equation}
\begin{aligned}
     D^j_{m^{\prime}m}  &= \bra{j,m^{\prime}}\mathcal{R}(\theta,\alpha,\varphi) \ket{j,m}\\
    &=e^{-im^{\prime}\theta}d^j_{{m^{\prime}m}}e^{-im\varphi},
\end{aligned}
\end{equation}
where 
\begin{equation*}
    d^j_{m^{\prime}m}=[\eta(j,m,m')]^{\frac{1}{2}}\sum_{s=s_{min}}^{s=s_{max}}\Bigg[\frac{(-1)^{m^{\prime}-m+s}\big(\cos\frac{\alpha}{2}\big)^{2j+m-m^{\prime}-2s}\big(\sin\frac{\alpha}{2}\big)^{m^{\prime}-m+2s}}{((j+m-s)!s!(m^{\prime}-m+s)!(j-m^{\prime}-s)!}\Bigg],
\end{equation*}
and $\eta(j,m,m') = (j+m^{\prime})!(j-m^{\prime})!(j+m)!(j+m)!$. Here $s_{min}=\text{max}(0,m-m^{prime})$ and $s_{max}=\text{min}(j+m,j-m)$. For the rotation part of the Floquet operator defined in Eq.\eqref{eq:floquet_unitary QKT}, we have $\theta=\varphi=0$. 
Therefore, $D^j_{m^{\prime}m} = d^j_{m^{\prime}m}$. When $\alpha = \pi/2$ we have, $\cos{\alpha/2} = \sin{\alpha/2} = \frac{1}{\sqrt{2}}.$ Therefore, $d^j_{m^{\prime}m}$ is $[\eta(j,m,m')]^{\frac{1}{2}}$ times a rational number. This implies that the rotation matrix has algebraic entries. This remains true when $\alpha$ is any rational multiple of $\pi$, as in such case $\cos{\alpha/2}$ and $\sin{\alpha/2}$ remain algebraic. 

\printbibliography

\end{document}